\documentclass[12pt]{article}

\usepackage[margin=1in]{geometry}
\usepackage{amsmath,amssymb,amsthm,mathtools,bm}
\usepackage{booktabs}
\usepackage{enumitem}
\usepackage{hyperref}

\usepackage{amsmath,amssymb,amsthm,mathtools,booktabs,bm,graphicx,microtype,array}
\usepackage[T1]{fontenc}
\usepackage{lmodern}
\usepackage{caption}

\usepackage{tikz}
\usepackage{pgfplots}
\pgfplotsset{compat=1.18}

\hypersetup{colorlinks=true,citecolor=blue,linkcolor=blue,urlcolor=blue}

\newtheorem{proposition}{Proposition}
\newtheorem{theorem}{Theorem}
\newtheorem{corollary}{Corollary}
\newtheorem{lemma}{Lemma}
\theoremstyle{definition}

\theoremstyle{remark}
\newtheorem{remark}{Remark}

\usepackage{lineno}

\newcommand{\E}{\mathbb{E}}
\newcommand{\Prob}{\mathbb{P}}
\newcommand{\Var}{\mathrm{Var}}
\newcommand{\Cov}{\mathrm{Cov}}
\newcommand{\Corr}{\mathrm{Corr}}
\newcommand{\ind}{\mathrm{ind}}
\newcommand{\dep}{\mathrm{dep}}
\newcommand{\st}{\mathrm{st}}

\makeatletter
\renewcommand{\maketitle}{%
	\begin{center}
		{\Large \bfseries \@title \par}
		\vskip 1em
		{\normalsize \@author \par}
		\vskip 0.5em
		{\@date \par}
	\end{center}
	\vskip 1em
}
\makeatother

\title{Exact analysis of a split--merge queue with latent Erlang-factor dependent subtask times}
\author{Dawit Zerom \\ California State University - Fullerton}
\date{}

\begin{document}
	
		%\linenumbers
		
	\maketitle

		\begin{abstract}
			 This paper studies a two-server split--merge queue with positively dependent
			subtask service times modeled through a latent-factor bivariate Erlang construction. An exact characterization of the split--merge completion time is obtained, including explicit formulas for its first two moments and the resulting mean waiting time. Under fixed marginal service-time distributions, independence is shown to
			stochastically increase the completion time and hence overestimate mean waiting
			time. Numerical illustrations show that this benchmark gap can be substantial.
				
			\smallskip\noindent
			\textbf{Keywords:} Split--merge queue; Dependent service times; Bivariate Erlang model; Stochastic order; Mean waiting time.
		\end{abstract}

%%%%%%%%%%%%%%%%%%%%%%%%%%%%%%

%% SECTION 1

%%%%%%%%%%%%%%%%%%%%%%%%%%%
	
	\section{Introduction}
	\label{sec:intro}
	
	In a two-server split--merge system, each arriving job is divided into two subtasks,
	one processed at each server, and the job departs only when both subtasks are complete.
	Because the next job cannot begin service until the current job has fully merged, the
	system is equivalent to an $M/G/1$ queue with service time
	\[
	M=\max\{X,Y\},
	\]
	where $X$ and $Y$ denote the two subtask service times. Thus, once the distribution
	of $M$ is characterized, standard $M/G/1$ formulas yield steady-state performance
	measures such as the mean waiting time and mean sojourn time.
	
	When $X$ and $Y$ are independent,
	\[
	\Prob(M\le t)=\Prob(X\le t,\;Y\le t)=F_X(t)F_Y(t),
	\]
	the distribution of $M$ is determined directly by the marginals. This yields explicit
	results in several settings, including phase-type and Erlang models; see, for example,
	Fiorini and Lipsky~\cite{FioriniLipsky2015}. 
	Under dependence, however, the event
	$\{X\le t,\;Y\le t\}$ no longer factorizes, so characterizing the distribution of $M$
	requires the joint law of $(X,Y)$, and exact closed-form results are correspondingly scarce. To our knowledge, existing exact treatments are essentially confined to simple Markovian special cases \cite{KoSerfozo2004}.
	 More generally, analyses of split--merge and related synchronization systems beyond the simplest independent setting have often proceeded through bounds, approximations, or asymptotic methods ~\cite{BaccelliMakowski1989,BaccelliMakowskiShwartz1989,VarmaMakowski1994,
		TanKnessl1996,QiuPerezHarrison2015,Tsimashenka2016,Thomasian2014,NelsonTantawi1988}.
			
Yet dependence between $X$ and $Y$ is natural and arguably the norm in many
split--merge applications. In storage systems, manufacturing settings, and
parallel data services, the two subtasks generated by the same job are often
influenced by a common latent characteristic such as job size, complexity, or
priority. Such shared job-specific factors induce positive same-job dependence
across servers even when the server-specific processing components differ. The
challenge, therefore, is to incorporate this dependence structure in a way that
remains analytically tractable.
    
	Motivated by this, we study a dependent split--merge model based on a
	parsimonious bivariate Erlang construction. The subtask times $(X,Y)$ are generated from
	shared latent Erlang factors through a positive linear map, so dependence arises from
	common latent workload components while preserving analytically useful structure. This
	construction is therefore appealing on both modeling and technical grounds: it captures a
	natural source of positive dependence and still permits an exact reduction of the joint
	event $\{X\le t,\;Y\le t\}$ to a one-dimensional calculation. As a result, the
	split--merge completion time $M=\max\{X,Y\}$ remains analytically tractable.
	
	For this model, we derive explicit expressions for the distribution of $M$, its first two
	moments, and the corresponding mean waiting time. The exact formulas also enable a clean
	comparison with the benchmark model that imposes independence while preserving the same
	marginal distributions. We show that the proposed dependence structure implies positive
	quadrant dependence, which in turn yields
	\[
	M \le_{\mathrm{st}} M^{\mathrm{ind}},
	\]
	where $M^{\mathrm{ind}}$ denotes the completion time under the corresponding independence
	benchmark. Consequently, the independence benchmark is conservative for mean waiting time.
	
	The contribution of the paper is therefore twofold. First, it provides an exact analysis
	of a dependent split--merge model whose dependence mechanism is both interpretable and
	tractable. Second, it establishes a structural same-marginal comparison showing that
	independence systematically overestimates completion times and mean waiting times under
	this form of dependence. Numerical illustrations show that the resulting benchmark gap
	can be substantial across different loads and parameter specifications.
	
	The remainder of the paper is organized as follows.
	Section~\ref{sec:model} introduces the dependent bivariate Erlang model.
	Section~\ref{sec:exact} derives the distribution and moments of $M$.
	Section~\ref{sec:comparison} establishes the comparison with the independence benchmark.
	Section~\ref{sec:numerics} presents numerical results.
%	, and Section~\ref{sec:conclusion}
%	concludes.

	%%%%%%%%%%%%%%%%%%%%%%%%%%%
	%%% SECTION 2 
	%%%%%%%%%%%%%%%%%%%%%%%%%

\section{A tractable dependent model for subtask service times}
\label{sec:model}

We model the subtask service-time pair $(X,Y)$ by a bivariate Erlang construction that
induces positive same-job dependence while preserving analytical tractability for the
split--merge completion time $M=\max\{X,Y\}$.

\medskip
\noindent
{\bf Construction.}
Let
\[
U\sim \mathrm{Erlang}(\alpha_1,\lambda_1), \qquad
V\sim \mathrm{Erlang}(\alpha_2,\lambda_2),
\]
be independent, where $\alpha_1,\alpha_2\in\mathbb{N}$ and $\lambda_1,\lambda_2>0$.
We define
\begin{equation}
	X=aU+bV,\qquad Y=cU+dV,
	\label{eq:model}
\end{equation}
with $a,b,c,d>0$. Thus $X$ and $Y$ are increasing functions of the same latent Erlang
factors, which induces positive dependence through shared job-level workload components. 

\medskip
\noindent
{\bf Support and joint density.}
To ensure that the linear map \eqref{eq:model} is invertible, we assume
\begin{equation*}
	\Delta:=ad-bc>0.
	%\label{eq:detcond}
\end{equation*}
Then
\begin{equation}
	u=\frac{dx-by}{\Delta},\qquad
	v=\frac{ay-cx}{\Delta}.
	\label{eq:inversemap}
\end{equation}
Since $U,V\ge 0$, a point $(x,y)$ belongs to the support of $(X,Y)$ if and only if
\[
dx-by\ge 0,\qquad ay-cx\ge 0,
\]
or equivalently,
\begin{equation*}
	\frac{c}{a}\,x\le y\le \frac{d}{b}\,x.
%	\label{eq:wedgeineq}
\end{equation*}
Hence the support is the wedge
\begin{equation*}
	\mathcal{W}=\left\{(x,y):x>0,\; \frac{c}{a}\,x\le y\le \frac{d}{b}\,x\right\}.
%	\label{eq:wedge}
\end{equation*}

Because \eqref{eq:inversemap} is a linear bijection from $\mathcal{W}$ onto
$\mathbb{R}_{\ge 0}^2$ with Jacobian $1/\Delta$, the change-of-variables formula yields
\begin{align*}
	f_{X,Y}(x,y)
	&=
	\frac{\lambda_1^{\alpha_1}\lambda_2^{\alpha_2}}
	{(\alpha_1-1)!(\alpha_2-1)!\,\Delta^{\alpha_1+\alpha_2-1}}
	(dx-by)^{\alpha_1-1}(ay-cx)^{\alpha_2-1}
	\notag\\
	&\quad \times
	\exp\!\left(
	-\frac{\lambda_1(dx-by)+\lambda_2(ay-cx)}{\Delta}
	\right),
	\qquad (x,y)\in\mathcal{W},
%	\label{eq:density}
\end{align*}
and $f_{X,Y}(x,y)=0$ for $(x,y)\notin\mathcal{W}$. 

\medskip
\noindent
{\bf Dependence properties.}
By independence of $U$ and $V$,
\begin{equation*}
	\Cov(X,Y)
	=
	ac\,\Var(U)+bd\,\Var(V)
	=
	\frac{ac\,\alpha_1}{\lambda_1^2}
	+
	\frac{bd\,\alpha_2}{\lambda_2^2}
	>0.
%	\label{eq:covariance}
\end{equation*}
Moreover, $(X,Y)$ is a coordinatewise nondecreasing transformation of the independent
pair $(U,V)$. Hence $(X,Y)$ is associated in the sense of \cite{Esary1967}, and in
particular satisfies positive quadrant dependence:
\begin{equation}
	\Prob(X\le x,\;Y\le y)\ge F_X(x)F_Y(y),
	\qquad x,y\ge 0.
	\label{eq:pqd}
\end{equation}
This property will be used in Section~\ref{sec:comparison}. 

\medskip
\noindent
{\bf Relation to the literature.}
The construction \eqref{eq:model} belongs to the class of shared-factor
multivariate gamma models, in which dependent nonnegative random variables
are built from linear combinations of independent gamma components
~\cite{MathaiMoschopoulos1992}. Related Erlang-mixture families are discussed
in~\cite{LeeLin2012,WillmotWoo2015}.
In our setting, its linear structure yields an explicit inverse map and simple support geometry, which make the exact analysis in Section~\ref{sec:exact} possible.

		%%%%%%%%%%%%%%%%%%%%%%%%%%%
	%%% SECTION 3
	%%%%%%%%%%%%%%%%%%%%%%%%%
	\section{Exact split--merge analysis}
	\label{sec:exact}
	
	Jobs arrive according to a Poisson process with rate $\nu$. Since each job departs only
	after both subtasks are complete, the effective service time is
	\[
	M=\max\{X,Y\},
	\]
	and the system reduces to an $M/G/1$ queue with service-time distribution equal to that
	of $M$. Hence
	\begin{equation*}
		G_M(t):=\Prob(M\le t)=\Prob(X\le t,\;Y\le t), \qquad t\ge 0.
	%	\label{eq:GMdiag}
	\end{equation*}
	Under the model of Section~\ref{sec:model},
	\begin{equation}
		G_M(t)=\Prob(aU+bV\le t,\;cU+dV\le t).
		\label{eq:latent_event}
	\end{equation}
	
	To exclude the degenerate cases in which one subtask dominates almost surely, we impose
	\begin{equation}
		a>c,\qquad d>b.
		\label{eq:crossing}
	\end{equation}
	Indeed, if $a\ge c$ and $b\ge d$, then $X\ge Y$ a.s. and $M=X$; similarly, if
	$c\ge a$ and $d\ge b$, then $M=Y$ a.s. Under \eqref{eq:crossing}, neither subtask
	dominates pointwise.
	
	Conditioning on $U=u$ in \eqref{eq:latent_event} gives
	\[
	V\le \frac{t-au}{b}, \qquad V\le \frac{t-cu}{d}.
	\]
	Since $V\ge 0$ and $a>c$, the feasible range is
	\begin{equation*}
		0\le u\le \frac{t}{a}.
	%	\label{eq:ufeasible}
	\end{equation*}
	On this interval, the active upper bound on $V$ is
	\[
	\min\!\left\{\frac{t-au}{b},\,\frac{t-cu}{d}\right\}.
	\]
	The two bounds coincide at
	\begin{equation*}
		u^*(t)=\frac{(d-b)t}{\Delta}, \qquad \Delta:=ad-bc>0.
	%	\label{eq:ustar}
	\end{equation*}
	When \eqref{eq:crossing} holds,
 one has $0<u^*(t)<t/a$, and
	\[
	\min\!\left\{\frac{t-au}{b},\,\frac{t-cu}{d}\right\}
	=
	\begin{cases}
		\dfrac{t-cu}{d}, & 0\le u\le u^*(t), \\[1.2ex]
		\dfrac{t-au}{b}, & u^*(t)\le u\le t/a.
	\end{cases}
	\]

	The exact distribution of $M$ follows by conditioning on $U$ with the integral split at the crossing point $u^*(t)$.
	
	\begin{theorem}
		\label{thm:main}
		Suppose \eqref{eq:crossing} holds. Then, for every $t\ge 0$,
		\begin{equation}
			G_M(t)
			=
			\int_0^{u^*(t)} F_V\!\left(\frac{t-cu}{d}\right)f_U(u)\,du
			+
			\int_{u^*(t)}^{t/a} F_V\!\left(\frac{t-au}{b}\right)f_U(u)\,du.
			\label{eq:GMfinal}
		\end{equation}
	\end{theorem}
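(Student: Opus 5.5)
The plan is to compute $G_M(t)=\Prob(aU+bV\le t,\;cU+dV\le t)$ from \eqref{eq:latent_event} by conditioning on $U$. Independence of $U$ and $V$ lets us reduce the two-dimensional event to a one-dimensional integral. The case $t=0$ is trivial: both sides vanish, since $u^*(0)=0=t/a$ and $U,V$ have continuous laws. So we fix $t>0$.

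\textbf{Step 1 (conditioning).} By independence and Fubini (or the law of total probability with the density $f_U$),
\[
G_M(t)=\int_0^\infty \Prob\!\left(V\le \tfrac{t-au}{b},\;V\le \tfrac{t-cu}{d}\right)f_U(u)\,du
=\int_0^\infty F_V\!\left(\min\!\left\{\tfrac{t-au}{b},\tfrac{t-cu}{d}\right\}\right)f_U(u)\,du,
\]
with the convention $F_V(s)=0$ for $s<0$.

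\textbf{Step 2 (truncation to $[0,t/a]$).} For $u>t/a$ we have $(t-au)/b<0$, so the integrand vanishes because $V\ge 0$. Since $F_V(0)=0$, the boundary point is irrelevant. Note that $a>c$ gives $t/a<t/c$, so the binding constraint for feasibility is indeed the first one.

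\textbf{Step 3 (identifying the minimum).} Compare the two affine functions $\ell_1(u)=(t-au)/b$ and $\ell_2(u)=(t-cu)/d$.
\begin{itemize}
\item At $u=0$ we have $\ell_1(0)=t/b>t/d=\ell_2(0)$, since $d>b$.
\item At $u=t/a$ we have $\ell_1(t/a)=0<(t-ct/a)/d=\ell_2(t/a)$, since $a>c$.
\end{itemize}
Since $\ell_1-\ell_2$ is affine, it has a unique zero. Solving $d(t-au)=b(t-cu)$ gives $u=(d-b)t/(ad-bc)=u^*(t)$. The endpoint sign change places $u^*(t)$ in $(0,t/a)$, and $\Delta>0$ follows automatically from $ad>bc$, which holds because $a>c$ and $d>b$. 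Hence the minimum is $\ell_2$ on $[0,u^*(t)]$ and $\ell_1$ on $[u^*(t),t/a]$, and both are nonnegative there.

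\textbf{Step 4 (splitting).} Splitting the integral at $u^*(t)$ yields \eqref{eq:GMfinal}.

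The only real care point is Step 3: verifying the sign of $\ell_1-\ell_2$ at both endpoints so that the crossing lies strictly inside the feasible interval. Once that is checked, the rest is measure-theoretic routine.
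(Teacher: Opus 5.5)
Your proposal is correct and follows essentially the same route as the paper: condition on $U$, restrict to the feasible range $[0,t/a]$, locate the crossing point $u^*(t)=(d-b)t/\Delta$ of the two affine bounds, and split the integral there. Your explicit endpoint sign check for $\ell_1-\ell_2$ and your handling of $t=0$ fill in details that the paper only asserts (namely that $0<u^*(t)<t/a$ under \eqref{eq:crossing}).
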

	
	Since $U\sim \mathrm{Erlang}(\alpha_1,\lambda_1)$ and
	$V\sim \mathrm{Erlang}(\alpha_2,\lambda_2)$,
	\[
	f_U(u)=\frac{\lambda_1^{\alpha_1}}{(\alpha_1-1)!}\,u^{\alpha_1-1}e^{-\lambda_1u},
	\qquad
	F_V(v)=1-e^{-\lambda_2v}\sum_{k=0}^{\alpha_2-1}\frac{(\lambda_2v)^k}{k!}.
	\]

	Thus \eqref{eq:GMfinal} can be evaluated in closed form in terms of finite sums of incomplete gamma terms.

	\begin{corollary}
		\label{cor:perf}
		Under the conditions of Theorem~\ref{thm:main},
		\begin{equation}
			\E[M]=\int_0^\infty \bigl(1-G_M(t)\bigr)\,dt,
			\qquad
			\E[M^2]=2\int_0^\infty t\,\bigl(1-G_M(t)\bigr)\,dt.
			\label{eq:momentformulas}
		\end{equation}
		If moreover $\rho:=\nu\,\E[M]<1$, then
		\begin{equation}
			\E[W]=\frac{\nu\,\E[M^2]}{2(1-\rho)},
			\qquad
			\E[T]=\E[W]+\E[M].
			\label{eq:PK}
		\end{equation}
	\end{corollary}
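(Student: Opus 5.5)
The moment formulas in \eqref{eq:momentformulas} follow from the tail-integral representation of moments of a nonnegative random variable. Since $M=\max\{X,Y\}\ge 0$, we have $M^k=\int_0^\infty k t^{k-1}\mathbf{1}\{M>t\}\,dt$ for $k=1,2$. Taking expectations and applying Tonelli's theorem, which is legitimate because the integrand is nonnegative, gives $\E[M^k]=\int_0^\infty k t^{k-1}\Prob(M>t)\,dt$. Here $\Prob(M>t)=1-G_M(t)$, with $G_M$ given explicitly by Theorem~\ref{thm:main}. This yields both identities, with the understanding that either side may a priori be $+\infty$.

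The second step is to check that both moments are finite, so that the subsequent formulas are meaningful. For this I would bound $M\le X+Y=(a+c)U+(b+d)V$. Both $U$ and $V$ are Erlang, so they have finite moments of all orders. By Minkowski's inequality (or the elementary bound $(x+y)^2\le 2x^2+2y^2$), $\E[M^2]<\infty$ and hence $\E[M]<\infty$. For a concrete expression one can substitute the integrand of \eqref{eq:GMfinal}. Because $F_V$ is a finite exponential--polynomial and $f_U$ is a gamma density, every term reduces to finite sums of incomplete gamma functions. That evaluation is routine and is not needed for the statement itself.

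For \eqref{eq:PK}, I would invoke the reduction described at the start of Section~\ref{sec:exact}. A new job cannot enter service until the previous job has merged, so the system is an $M/G/1$ queue with Poisson arrivals of rate $\nu$ and i.i.d.\ service times distributed as $M$. One point needs justifying here: the pairs $(X,Y)$ are independent across jobs, and they are independent of the arrival process, since each job's latent pair $(U,V)$ is drawn afresh. When $\rho=\nu\E[M]<1$ and $\E[M^2]<\infty$, the Pollaczek--Khinchine mean-value formula gives $\E[W]=\nu\E[M^2]/(2(1-\rho))$ for the stationary waiting time. The sojourn time is waiting plus the job's own service, and the job's service is independent of its wait, so $\E[T]=\E[W]+\E[M]$.

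The only genuine obstacle is the finiteness of $\E[M^2]$, since the Pollaczek--Khinchine formula requires it. The domination $M\le X+Y$ settles this, so the argument is essentially a direct assembly of standard facts.
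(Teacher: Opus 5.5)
Your proposal is correct and follows the same route as the paper: the tail-integral identity for the nonnegative variable $M$ gives the two moment formulas, and the Pollaczek--Khinchine formula for the reduced $M/G/1$ queue gives $\E[W]$ and $\E[T]$. Your bound $M\le X+Y=(a+c)U+(b+d)V$, which guarantees $\E[M^2]<\infty$, and your check that service times are i.i.d.\ across jobs fill in details the paper leaves implicit. Note that $\E[T]=\E[W]+\E[M]$ needs only linearity of expectation, not independence of a job's wait and its own service time.
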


	\begin{proof}
		Equation \eqref{eq:momentformulas} follows from standard tail-integral identities, and \eqref{eq:PK} follows from the Pollaczek--Khinchine formula for the corresponding $M/G/1$ queue.
	\end{proof}
	
	The next lemma evaluates the first two moments explicitly.
	
	\begin{lemma}
		\label{lem:moments}
		Under the conditions of Theorem~\ref{thm:main}, define
		\begin{equation}
			\mu:=\frac{a-c}{d-b}, \qquad \kappa:=\lambda_1+\lambda_2\mu.
			\label{eq:mukappa}
		\end{equation}
		Then
		\begin{equation}
			\E[M]
			=
			\frac{a\alpha_1}{\lambda_1}
			+
			\frac{b\alpha_2}{\lambda_2}
			+
			\frac{(d-b)\lambda_1^{\alpha_1}}{(\alpha_1-1)!}
			\sum_{k=0}^{\alpha_2-1}
			\frac{(\alpha_2-k)\mu^k\lambda_2^{\,k-1}}{k!}
			\cdot
			\frac{(\alpha_1+k-1)!}{\kappa^{\alpha_1+k}},
			\label{eq:EMclosed}
		\end{equation}
		and 
		\begin{align}
			\E[M^2]
			&=
			\frac{a^2\alpha_1(\alpha_1+1)}{\lambda_1^2}
			+
			\frac{b^2\alpha_2(\alpha_2+1)}{\lambda_2^2}
			+
			\frac{2ab\,\alpha_1\alpha_2}{\lambda_1\lambda_2}
			\notag\\
			&\quad
			+
			\frac{2(d^2-b^2)\lambda_1^{\alpha_1}}{(\alpha_1-1)!}
			\sum_{k=0}^{\alpha_2}
			\frac{S_k\mu^k\lambda_2^{\,k-2}}{k!}
			\cdot
			\frac{(\alpha_1+k-1)!}{\kappa^{\alpha_1+k}}
			\notag\\
			&\quad
			+
			\frac{2(dc-ab)\lambda_1^{\alpha_1}}{(\alpha_1-1)!}
			\sum_{k=0}^{\alpha_2-1}
			\frac{(\alpha_2-k)\mu^k\lambda_2^{\,k-1}}{k!}
			\cdot
			\frac{(\alpha_1+k)!}{\kappa^{\alpha_1+k+1}},
			\label{eq:EM2closed}
		\end{align}

		where
		\begin{equation}
			S_k:=\sum_{j=\max(0,k-1)}^{\alpha_2-1}(j+1).
			\label{eq:Sk}
		\end{equation}
	\end{lemma}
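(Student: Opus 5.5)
The proof will rest on the pathwise identity $M=\max\{X,Y\}=X+(Y-X)^+$. Under \eqref{eq:model}, $Y-X=(d-b)V-(a-c)U=(d-b)(V-\mu U)$, with $\mu$ as in \eqref{eq:mukappa}. Hence
\[
M=aU+bV+(d-b)\,(V-\mu U)^+ .
\]
The first two terms of \eqref{eq:EMclosed} are just $\E[X]=a\alpha_1/\lambda_1+b\alpha_2/\lambda_2$. It therefore remains to compute $\E[(V-\mu U)^+]$.

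To do this, condition on $U=u$ and use $\E[(V-s)^+]=\int_s^\infty \Prob(V>v)\,dv$ with the Erlang tail $\Prob(V>v)=e^{-\lambda_2 v}\sum_{k<\alpha_2}(\lambda_2 v)^k/k!$. A standard computation gives
\[
\E[(V-s)^+]=\frac1{\lambda_2}\,e^{-\lambda_2 s}\sum_{k=0}^{\alpha_2-1}(\alpha_2-k)\frac{(\lambda_2 s)^k}{k!}.
\]
Each tail term integrates to a partial sum, and swapping the sums produces the weight $\alpha_2-k$. Substitute $s=\mu u$ and integrate against $f_U$. The gamma integral $\int_0^\infty u^{\alpha_1+k-1}e^{-\kappa u}du=(\alpha_1+k-1)!/\kappa^{\alpha_1+k}$ then gives exactly the sum in \eqref{eq:EMclosed}, with prefactor $(d-b)\lambda_1^{\alpha_1}/(\alpha_1-1)!$. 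As a cross-check one can also obtain the result from Theorem~\ref{thm:main} and Corollary~\ref{cor:perf}, but the direct route is cleaner.

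For the second moment, square the identity:
\[
M^2=X^2+2X(Y-X)^++((Y-X)^+)^2=X^2+(Y^2-X^2)\mathbf 1\{Y>X\}.
\]
$\E[X^2]$ gives the first three terms of \eqref{eq:EM2closed}. On $\{Y>X\}$, expand
\[
Y^2-X^2=(d^2-b^2)V^2+2(cd-ab)UV+(c^2-a^2)U^2 .
\]
Then rewrite this in terms of $W:=V-\mu U$: it becomes a quadratic in $W$ and $U$ whose coefficients match the displayed prefactors $(d^2-b^2)$ and $(dc-ab)$. The $W^2$ part needs $\E[((V-s)^+)^2]=2\int_s^\infty (v-s)\Prob(V>v)\,dv$. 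Evaluated with the same Erlang tail, this is $e^{-\lambda_2 s}$ times a polynomial in $\lambda_2 s$ of degree $\alpha_2$ with coefficients $S_k/\lambda_2^2$, the $S_k$ being the nested partial sums in \eqref{eq:Sk}. The cross term $U\cdot(V-\mu U)^+$ reuses the first-moment formula, with one extra power of $u$; this accounts for $(\alpha_1+k)!/\kappa^{\alpha_1+k+1}$. Integrating against $f_U$ again uses only gamma integrals with rate $\kappa$.

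The main obstacle is the bookkeeping in the second moment. One must regroup $Y^2-X^2$ on $\{Y>X\}$ so that only the terms $(d^2-b^2)W^2$ and $2(dc-ab)UW$ survive, and any leftover $U^2$ or $UV$ pieces must cancel or be absorbed. One must also verify the coefficient $S_k=\sum_{j\ge\max(0,k-1)}(j+1)$ by carefully exchanging the double sum in $\int_s^\infty(v-s)e^{-\lambda_2 v}(\lambda_2v)^j/j!\,dv$. I would first carry out the regrouping symbolically, checking $M^2=(X+(d-b)W^+)^2$ directly. Indeed
\[
M^2=X^2+2(d-b)XW^++(d-b)^2(W^+)^2 ,
\]
and with $X=aU+bV=(a+b\mu)U+bW$ this reduces to terms in $UW^+$ and $(W^+)^2$. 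I would then reconcile the resulting coefficients with the displayed ones, adjusting if the displayed grouping turns out to differ by algebraically equivalent constants.
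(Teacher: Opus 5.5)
Your first-moment argument is correct, and it is a genuinely different route from the paper's. The pathwise identity $M=X+(d-b)(V-\mu U)^+$ lands directly on the paper's intermediate expression \eqref{eq:app-EM-start}. It bypasses Theorem~\ref{thm:main}, the three-term decomposition of $1-G_M$, and the Tonelli region bookkeeping, and your formula for $\E[(V-s)^+]$ agrees with the paper's.

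The second moment has a genuine gap, because two identifications you rely on are false.
\begin{enumerate}
\item With $W=V-\mu U$ one has $Y^2-X^2=(d^2-b^2)W^2+2\Delta\,UW$, where $\Delta=ad-bc$. The $U^2$ coefficient does cancel, but the $UW$ coefficient is $2\Delta$, not $2(dc-ab)$. Your own expansion of $(X+(d-b)W^+)^2$ shows this once you note $(d-b)(a+b\mu)=\Delta$. For example, if $a=d$ and $b=c$, the displayed third sum vanishes because $dc-ab=0$, while $\Delta=a^2-c^2>0$.
\item $\E[((V-s)^+)^2]=2\int_s^\infty(v-s)\overline F_V(v)\,dv$ is not the $S_k$ polynomial. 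It equals $e^{-\lambda_2 s}\sum_{k=0}^{\alpha_2-1}(\alpha_2-k)(\alpha_2-k+1)\lambda_2^{k-2}s^k/k!$, which has degree $\alpha_2-1$; for $\alpha_2=1$ it is $2e^{-\lambda_2 s}/\lambda_2^2$. The $S_k$ expression $e^{-\lambda_2 s}\sum_{k=0}^{\alpha_2}S_k\lambda_2^{k-2}s^k/k!$ is instead $\int_s^\infty v\,\overline F_V(v)\,dv=\tfrac12\E[(V^2-s^2)^+]$.
\end{enumerate}

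Carried out as written, your route yields the correct but differently grouped expression $\E[X^2]+(d^2-b^2)\E[(W^+)^2]+2\Delta\,\E[UW^+]$. The reconciliation you defer is exactly the missing step. The idea you need is
\[
(V^2-\mu^2U^2)^+=(W^+)^2+2\mu\,UW^+ ,
\]
together with $\Delta-\mu(d^2-b^2)=dc-ab$. These give
\[
M^2=X^2+(d^2-b^2)\,(V^2-\mu^2U^2)^+ +2(dc-ab)\,U\,(V-\mu U)^+ .
\]
Now condition on $U=u$ and use $\E[(V^2-s^2)^+]=2\int_s^\infty v\,\overline F_V(v)\,dv$, expanded via \eqref{eq:app-tail} with $n=j+1$. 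This produces the $S_k$ sum (the paper's $2\theta_s$), and the last term gives $2\theta_u$.

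In the paper this grouping falls out automatically. The substitutions $t=ds+cu$ and $t=bs+au$ in $2\int_0^\infty t\,(1-G_M(t))\,dt$ directly generate the integrands $s\,\overline F_V(s)$ and $u\,\overline F_V(s)$, with net coefficients $d^2-b^2$ and $dc-ab$.

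Alternatively, you can keep your grouping. Move $2\mu(d^2-b^2)\E[UW^+]$ into the $(W^+)^2$ term, and verify the coefficient identity $(\alpha_2-k)(\alpha_2-k+1)+2k(\alpha_2-k+1)=2S_k$ for $0\le k\le\alpha_2$.
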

	
	\begin{proof}
		The result follows by substituting \eqref{eq:GMfinal} into \eqref{eq:momentformulas}, expanding the Erlang cdf of $V$, and evaluating the resulting finite sums of gamma-type integrals. The full derivation is provided in the Appendix.
	\end{proof}

	%%%%%%%%%%%%%%%%%%%%%%%%%%%%%%%%%%%
	
	%% SECTION 4 
	
	%%%%%%%%%%%%%%%%%%%%%%%%%%%%%%%%%

	\section{Comparison with an independence benchmark}
	\label{sec:comparison}
	
	We compare the dependent split--merge model with the benchmark obtained by preserving
	the marginals of $X$ and $Y$ while removing their dependence.
	
	\medskip
	\noindent
	{\bf Independence benchmark.}
	Let $(X^{\ind},Y^{\ind})$ satisfy
	\[
	X^{\ind}\stackrel{d}{=}X,\qquad Y^{\ind}\stackrel{d}{=}Y,
	\]
	with $X^{\ind}$ and $Y^{\ind}$ independent, and define
	\[
	M^{\ind}=\max\{X^{\ind},Y^{\ind}\}.
	\]
	Let $W^{\dep}$ and $W^{\ind}$ denote the steady-state waiting times in the corresponding
	$M/G/1$ queues with service times $M$ and $M^{\ind}$, respectively.
	
	\begin{proposition}
		\label{prop:benchmark}
		Under the model of Section~\ref{sec:model},
		\begin{equation}
			G_M(t)=\Prob(X\le t,\;Y\le t)\ge F_X(t)F_Y(t)=\Prob(M^{\ind}\le t),
			\qquad t\ge 0.
			\label{eq:diagcompare}
		\end{equation}
		Hence
		\begin{equation}
			M\le_{\st} M^{\ind},
			\label{eq:storder}
		\end{equation}
		where $\le_{\st}$ denotes the usual stochastic order. Consequently,
		\begin{equation}
			\E[M^r]\le \E[(M^{\ind})^r], \qquad r>0,
			\label{eq:momentcompare}
		\end{equation}
		and, whenever both queues are stable at arrival rate $\nu$,
		\begin{equation}
			\E[W^{\dep}]\le \E[W^{\ind}].
			\label{eq:waitingcompare}
		\end{equation}
	\end{proposition}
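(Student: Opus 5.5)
The plan is to read the diagonal inequality \eqref{eq:diagcompare} off the positive quadrant dependence property \eqref{eq:pqd} and then push it through the standard consequences of the usual stochastic order. The first step is to justify \eqref{eq:pqd} directly rather than only citing \cite{Esary1967}. Both indicators $\mathbf{1}\{aU+bV\le t\}$ and $\mathbf{1}\{cU+dV\le t\}$ are coordinatewise nonincreasing functions of the independent pair $(U,V)$. For independent real random variables, nonincreasing functions of a product measure are positively correlated (Harris--FKG). Concretely, I will condition on $U=u$. Each indicator is then nonincreasing in $V$, so Chebyshev's association inequality for a single variable gives
\[
\Prob(X\le t,Y\le t\mid U=u)\ge \Prob(X\le t\mid U=u)\,\Prob(Y\le t\mid U=u).
\]
Both conditional probabilities are nonincreasing in $u$, so applying Chebyshev's inequality once more in $u$ yields $G_M(t)\ge F_X(t)F_Y(t)$. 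Independence of $X^{\ind}$ and $Y^{\ind}$, together with equality of marginals, gives $F_X(t)F_Y(t)=\Prob(M^{\ind}\le t)$. This establishes \eqref{eq:diagcompare}.

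Next, \eqref{eq:diagcompare} says exactly that $\Prob(M>t)\le\Prob(M^{\ind}>t)$ for all $t\ge0$, and both variables are nonnegative, so \eqref{eq:storder} holds by definition. For \eqref{eq:momentcompare} I use the tail formula
\[
\E[M^r]=\int_0^\infty r t^{r-1}\Prob(M>t)\,dt,
\]
and compare the integrands pointwise. This holds in $[0,\infty]$, so no finiteness assumption is needed. In particular it gives $\E[M]\le\E[M^{\ind}]$ and $\E[M^2]\le\E[(M^{\ind})^2]$.

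Finally, for \eqref{eq:waitingcompare} I apply the Pollaczek--Khinchine formula \eqref{eq:PK} to both queues, assuming both are stable. The numerator $\nu\E[M^2]$ is at most $\nu\E[(M^{\ind})^2]$. The load satisfies $\rho=\nu\E[M]\le\nu\E[M^{\ind}]=\rho^{\ind}<1$, so $1/(1-\rho)\le 1/(1-\rho^{\ind})$. All quantities are nonnegative, so multiplying the two inequalities gives $\E[W^{\dep}]\le\E[W^{\ind}]$. The only delicate point is the association step in the first paragraph, and the two-stage conditioning argument handles it without appealing to general theory. Everything after that is monotonicity bookkeeping.
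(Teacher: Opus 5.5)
Your proof is correct and follows the same chain as the paper: positive quadrant dependence gives the diagonal inequality \eqref{eq:diagcompare}, which is the usual stochastic order, and that in turn gives the moment comparison and then the waiting-time comparison via the Pollaczek--Khinchine formula. The only differences are that you prove the needed PQD inequality directly, by Chebyshev's inequality applied first in $V$ given $U=u$ and then in $U$, instead of citing the association result of Esary, Proschan and Walkup; and that you make explicit the denominator comparison $1-\rho\ge 1-\rho^{\ind}>0$, which the paper leaves implicit in saying that $\E[W]$ is increasing in $\E[M]$ and $\E[M^2]$.
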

	
	\begin{proof}
		Inequality \eqref{eq:diagcompare} follows from the positive quadrant dependence property
		\eqref{eq:pqd}. This implies \eqref{eq:storder}. The moment inequalities in
		\eqref{eq:momentcompare} then follow from a standard property of the usual stochastic
		order; see \cite{ShakedShanthikumar2007}. 
		Finally, \eqref{eq:waitingcompare} follows from \eqref{eq:PK}, since $\E[W]$ is increasing in
		$\E[M]$ and $\E[M^2]$.
		\end{proof}
	
	\begin{remark}
		The comparison in Proposition~\ref{prop:benchmark} is a same-marginal benchmark comparison:
		the distributions of $X$ and $Y$ are held fixed, and only their dependence structure is changed.
		Accordingly, the inequalities in \eqref{eq:storder}--\eqref{eq:waitingcompare} isolate the effect of dependence alone. In particular, for the class considered here, replacing the dependent pair $(X,Y)$ by an independent pair with the same marginals yields a conservative approximation in the sense that it overestimates completion times and mean waiting times.
	\end{remark}
	
	%%%%%%%%%%%%%%%%%%%%%%%%%%%
	%%% SECTION 5 
	%%%%%%%%%%%%%%%%%%%%%%%%%
	\section{Numerical illustration}
	\label{sec:numerics}
	
	This section quantifies the waiting-time gap implied by Proposition~\ref{prop:benchmark} between the dependent model and the same-marginal independence benchmark across a range of parameter configurations. The dependent-model moments are computed from Lemma~\ref{lem:moments}. For the independence benchmark, the moments of $M^{\ind}$ are computed exactly using the matrix-exponential representation of ~\cite{FioriniLipsky2015}. All numerical results are therefore based on exact moment calculations.

	\subsection{Verification of Lemma~\ref{lem:moments}}
	
	Table~\ref{tab:verification} compares the closed-form expressions
	\eqref{eq:EMclosed}--\eqref{eq:EM2closed} with direct numerical quadrature of the
	tail-integral formulas in \eqref{eq:momentformulas}, using $G_M(t)$ from
	Theorem~\ref{thm:main}, for the baseline family
	\[
	\alpha_1=\alpha_2=2,\qquad \lambda_1=\lambda_2=2,\qquad a+b=c+d=1,\qquad a>c.
	\]
	The two approaches—our exact formulas and direct numerical quadrature—agree to machine precision throughout.
	
	\begin{table}[!h]
		\centering
		\caption{Verification of Lemma~\ref{lem:moments}. Closed-form (CF) values are from
			\eqref{eq:EMclosed}--\eqref{eq:EM2closed}; quadrature (Q) values integrate
			$G_M(t)$ from Theorem~\ref{thm:main}.}
		\label{tab:verification}
		\footnotesize
		\setlength{\tabcolsep}{5pt}
		\begin{tabular}{@{}ccccccc@{}}
			\toprule
			$(a,c)$ & $\Corr(X,Y)$ &
			$\E[M]_{\mathrm{CF}}$ & $\E[M]_{\mathrm{Q}}$ &
			$\E[M^2]_{\mathrm{CF}}$ & $\E[M^2]_{\mathrm{Q}}$ & Error \\
			\midrule
			$(0.55,0.45)$ & 0.980 & 1.03750 & 1.03750 & 1.34625 & 1.34625 & $<10^{-13}$ \\
			$(0.65,0.35)$ & 0.835 & 1.11250 & 1.11250 & 1.55375 & 1.55375 & $<10^{-13}$ \\
			$(0.75,0.25)$ & 0.600 & 1.18750 & 1.18750 & 1.78125 & 1.78125 & $<10^{-13}$ \\
			$(0.85,0.15)$ & 0.342 & 1.26250 & 1.26250 & 2.02875 & 2.02875 & $<10^{-13}$ \\
			$(0.95,0.05)$ & 0.105 & 1.33750 & 1.33750 & 2.29625 & 2.29625 & $<10^{-13}$ \\
			\bottomrule
		\end{tabular}
	\end{table}
	
	\subsection{Magnitude of the benchmark gap}
	
	Table~\ref{tab:baseline} reports $\E[W^{\dep}]$ and $\E[W^{\ind}]$ for the symmetric parameter family described above at arrival rate $\nu=0.45$. The relative gap,
	\[
	\frac{\E[W^{\ind}] - \E[W^{\dep}]}{\E[W^{\dep}]},
	\]
	ranges from $4.0\%$ at correlation $0.105$ to $73.8\%$ at correlation $0.980$. Hence, the independence benchmark is conservative across all specifications considered, and its bias grows sharply as dependence strengthens.

		Table~\ref{tab:loadrobust} extends the comparison to $\nu\in\{0.25,0.45,0.65\}$. The same qualitative pattern holds at each load level, with the conservative bias becoming larger as the system becomes more heavily loaded.  This sensitivity to load, however, is itself governed by the strength of dependence: for weak
	dependence ($\Corr=0.105$), the relative gap grows modestly from $2.4\%$ to $12.6\%$ as
	$\nu$ increases from $0.25$ to $0.65$, whereas for strong dependence ($\Corr=0.980$) it
	grows from $51.1\%$ to $163.9\%$ over the same range. Thus heavier traffic
	disproportionately amplifies the independence bias when dependence is strong.

	\begin{table}[!t]
		\centering
		\caption{Baseline symmetric family: $\alpha_1=\alpha_2=2$,
			$\lambda_1=\lambda_2=2$, $a+b=c+d=1$, $a>c$, $\nu=0.45$. Waiting times are computed
			from Lemma~\ref{lem:moments} and \cite{FioriniLipsky2015} via \eqref{eq:PK}.}
		\label{tab:baseline}
		\footnotesize
		\setlength{\tabcolsep}{6pt}
		\begin{tabular}{@{}cccccc@{}}
			\toprule
			$(a,c)$ & $\Corr(X,Y)$ &
			$\E[W]_{\dep}$ & $\E[W]_{\ind}$ & Rel.\ gap \\
			\midrule
			$(0.55,0.45)$ & 0.980 & 0.568 & 0.987 & $+73.8\%$ \\
			$(0.60,0.40)$ & 0.923 & 0.631 & 1.001 & $+58.6\%$ \\
			$(0.65,0.35)$ & 0.835 & 0.700 & 1.023 & $+46.1\%$ \\
			$(0.70,0.30)$ & 0.724 & 0.776 & 1.054 & $+35.7\%$ \\
			$(0.75,0.25)$ & 0.600 & 0.861 & 1.094 & $+27.1\%$ \\
			$(0.80,0.20)$ & 0.471 & 0.954 & 1.143 & $+19.8\%$ \\
			$(0.85,0.15)$ & 0.342 & 1.057 & 1.202 & $+13.7\%$ \\
			$(0.90,0.10)$ & 0.220 & 1.171 & 1.271 & $+8.5\%$ \\
			$(0.95,0.05)$ & 0.105 & 1.298 & 1.349 & $+4.0\%$ \\
			\bottomrule
		\end{tabular}
	\end{table}
	
	\begin{table}[!t]
		\centering
		\caption{Relative gap $(\E[W]_{\ind}-\E[W]_{\dep})/\E[W]_{\dep}$ (\%)
			across three arrival rates. Same parameter family as Table~\ref{tab:baseline}.}
		\label{tab:loadrobust}
		\footnotesize
		\setlength{\tabcolsep}{6pt}
		\begin{tabular}{@{}ccccc@{}}
			\toprule
			$(a,c)$ & $\Corr(X,Y)$ & $\nu=0.25$ & $\nu=0.45$ & $\nu=0.65$ \\
			\midrule
			$(0.55,0.45)$ & 0.980 & 51.1\% & 73.8\% & 163.9\% \\
			$(0.65,0.35)$ & 0.835 & 31.4\% & 46.1\% & 106.3\% \\
			$(0.75,0.25)$ & 0.600 & 18.0\% & 27.1\% & 66.8\% \\
			$(0.85,0.15)$ & 0.342 & 8.8\% & 13.7\% & 37.5\% \\
			$(0.95,0.05)$ & 0.105 & 2.4\% & 4.0\% & 12.6\% \\
			\bottomrule
		\end{tabular}
	\end{table}

	\subsection{Robustness to asymmetric specifications}
	
	Table~\ref{tab:asym} reports results for ten asymmetric parameter configurations that vary the linear coefficients, Erlang shape parameters, and Erlang rates. These cases cover a wide range of dependence levels, with correlations between $0.150$ and $0.958$, and satisfy the model assumptions and stability condition at the stated arrival rate $\nu$.
	
	The same qualitative conclusion holds throughout. In every case, the independence benchmark overestimates the mean waiting time relative to the dependent model. The magnitude of this conservative bias varies substantially across specifications: in Cases A--C, the relative error ranges from $3.3\%$ to $8.8\%$, whereas in Cases H--J it ranges from $26.6\%$ to $47.7\%$. Thus, the size of the gap is not driven by correlation alone; it also depends on the marginal service-time parameters and the traffic intensity.

	\begin{table}[!t]
		\centering
		\caption{Asymmetric configurations ordered by increasing correlation.
			Dependent moments are from Lemma~\ref{lem:moments}; independent moments are from
			\cite{FioriniLipsky2015}. All cases satisfy \eqref{eq:crossing} and
			$\Delta=ad-bc>0$.}
		\label{tab:asym}
		\scriptsize
		\setlength{\tabcolsep}{3.5pt}
		\begin{tabular}{@{}clcccccc@{}}
			\toprule
			Case &
			\shortstack[l]{$(\alpha_1,\lambda_1;\,\alpha_2,\lambda_2)$} &
			$(a,b,c,d)$ &
			$\Corr$ & $\nu$ &
			$\E[W]_{\dep}$ & $\E[W]_{\ind}$ &
			Rel.\ gap \\
			\midrule
			A & $(2,2.0;\,2,2.0)$ & $(1.10,0.05,0.10,0.95)$ & 0.150 & 0.20 & 0.386 & 0.398 & $3.3\%$ \\
			B & $(2,2.0;\,2,2.0)$ & $(1.00,0.15,0.20,0.90)$ & 0.359 & 0.20 & 0.356 & 0.388 & $8.8\%$ \\
			C & $(3,2.5;\,1,1.3)$ & $(1.00,0.15,0.25,0.85)$ & 0.411 & 0.18 & 0.329 & 0.357 & $8.6\%$ \\
			D & $(1,1.5;\,3,3.0)$ & $(0.80,0.30,0.20,0.90)$ & 0.536 & 0.22 & 0.249 & 0.281 & $13.1\%$ \\
			E & $(2,2.0;\,2,2.0)$ & $(0.90,0.25,0.35,0.95)$ & 0.584 & 0.22 & 0.441 & 0.519 & $17.6\%$ \\
			F & $(2,2.0;\,2,2.0)$ & $(1.00,0.10,0.60,0.90)$ & 0.635 & 0.18 & 0.391 & 0.468 & $19.7\%$ \\
			G & $(3,2.2;\,2,1.8)$ & $(0.95,0.20,0.40,0.90)$ & 0.741 & 0.20 & 0.530 & 0.670 & $26.3\%$ \\
			H & $(2,2.0;\,2,2.0)$ & $(1.20,0.05,0.70,0.95)$ & 0.847 & 0.18 & 0.497 & 0.629 & $26.6\%$ \\
			I & $(2,2.0;\,2,2.0)$ & $(1.30,0.05,0.90,0.95)$ & 0.910 & 0.16 & 0.641 & 0.912 & $42.3\%$ \\
			J & $(4,3.0;\,2,1.7)$ & $(1.15,0.08,0.80,0.92)$ & 0.958 & 0.18 & 0.582 & 0.860 & $47.7\%$ \\
			\bottomrule
		\end{tabular}
	\end{table}

%	\bibliographystyle{elsarticle-num}
%	\bibliography{ORL-references}
%	

\section*{Appendix. Proof of Lemma~\ref{lem:moments}}

Throughout this appendix, $\mu$ and $\kappa$ are as defined in
\eqref{eq:mukappa}, and $S_k$ is as defined in \eqref{eq:Sk}.
We also use the standard identities
\begin{equation}
	\int_0^\infty u^{n-1} e^{-\kappa u}\,du
	=
	\frac{(n-1)!}{\kappa^n},
	\qquad n\ge 1,\ \kappa>0,
	\label{eq:app-gamma}
\end{equation}
and
\begin{equation}
	\int_\eta^\infty s^n e^{-\lambda_2 s}\,ds
	=
	e^{-\lambda_2 \eta}
	\sum_{i=0}^n
	\frac{n!}{i!}\,
	\frac{\eta^i}{\lambda_2^{\,n-i+1}},
	\qquad n\ge 0,\ \eta\ge 0,
	\label{eq:app-tail}
\end{equation}
both of which follow by repeated integration by parts.

\paragraph{Proof of \eqref{eq:EMclosed}.}
Starting from Theorem~\ref{thm:main}, write
\[
\E[M]=\int_0^\infty \bigl(1-G_M(t)\bigr)\,dt
\]
and using \eqref{eq:GMfinal}, decompose \(1-G_M(t)\) as
\begin{equation}
	1-G_M(t)
	=
	\int_0^{u^*(t)}
	\overline F_V\!\left(\frac{t-cu}{d}\right)f_U(u)\,du
	+
	\int_{u^*(t)}^{t/a}
	\overline F_V\!\left(\frac{t-au}{b}\right)f_U(u)\,du
	+
	\overline F_U\!\left(\frac{t}{a}\right),
	\label{eq:app-tail-decomp}
\end{equation}
where \(\overline F_V:=1-F_V\) and \(\overline F_U:=1-F_U\). Integrating over
\(t\in[0,\infty)\) gives
\[
\E[M]=J_1+J_2+J_3,
\]
where each \(J_i\) corresponds to one term of \eqref{eq:app-tail-decomp}.

\emph{Term \(J_3\).}
Substituting \(s=t/a\),
\[
J_3
=
\int_0^\infty \overline F_U\!\left(\frac{t}{a}\right)\,dt
=
a\int_0^\infty \overline F_U(s)\,ds
=
a\,\E[U]
=
\frac{a\alpha_1}{\lambda_1}.
\]

\emph{Term \(J_1\).}
The integration region is
\[
\{(u,t):0\le u\le u^*(t),\ t\ge 0\}.
\]
Since \(u^*(t)=(d-b)t/\Delta\), the condition \(u\le u^*(t)\) is equivalent to
\[
t\ge \frac{\Delta}{d-b}\,u.
\]
Applying Tonelli's theorem and then substituting \(s=(t-cu)/d\), so that
\(dt=d\,ds\), we obtain
\[
J_1
=
\int_0^\infty f_U(u)\int_{\Delta u/(d-b)}^\infty
\overline F_V\!\left(\frac{t-cu}{d}\right)\,dt\,du
=
d\int_0^\infty f_U(u)\int_{\mu u}^\infty \overline F_V(s)\,ds\,du,
\]
because
\[
\frac{\Delta u/(d-b)-cu}{d}
=
\frac{(ad-bc)u-c(d-b)u}{d(d-b)}
=
\frac{a-c}{d-b}\,u
=
\mu u.
\]

\emph{Term \(J_2\).}
The integration region is
\[
\{(u,t):u^*(t)\le u\le t/a,\ t\ge 0\},
\]
which is equivalent to
\[
\{(u,t):u\ge 0,\ au\le t\le \Delta u/(d-b)\}.
\]
Applying Tonelli's theorem and substituting \(s=(t-au)/b\), so that \(dt=b\,ds\),
gives
\[
J_2
=
\int_0^\infty f_U(u)\int_{au}^{\Delta u/(d-b)}
\overline F_V\!\left(\frac{t-au}{b}\right)\,dt\,du
=
b\int_0^\infty f_U(u)\int_0^{\mu u}\overline F_V(s)\,ds\,du,
\]
because
\[
\frac{\Delta u/(d-b)-au}{b}
=
\frac{(ad-bc)u-a(d-b)u}{b(d-b)}
=
\frac{a-c}{d-b}\,u
=
\mu u.
\]

\textcolor{blue}{Combining $J_1$ and $J_2$} and using
%Combining the three terms and using
\[
\int_0^\infty \overline F_V(s)\,ds=\E[V]=\frac{\alpha_2}{\lambda_2},
\]
we have
\[
J_1+J_2
=
d\int_0^\infty f_U(u)\int_{\mu u}^\infty \overline F_V(s)\,ds\,du
+
b\int_0^\infty f_U(u)\int_0^{\mu u}\overline F_V(s)\,ds\,du.
\]
Since
\[
\int_0^{\mu u}\overline F_V(s)\,ds
=
\E[V]-\int_{\mu u}^\infty \overline F_V(s)\,ds,
\]
it follows that
\[
J_1+J_2
=
b\,\E[V]
+
(d-b)\int_0^\infty f_U(u)\int_{\mu u}^\infty \overline F_V(s)\,ds\,du.
\]
Hence
\begin{equation}
	\E[M]
	=
	\frac{a\alpha_1}{\lambda_1}
	+
	\frac{b\alpha_2}{\lambda_2}
	+
	(d-b)\int_0^\infty f_U(u)\int_{\mu u}^\infty \overline F_V(s)\,ds\,du.
	\label{eq:app-EM-start}
\end{equation}

We now evaluate the double integral. Using the Erlang survival function,
\[
\overline F_V(s)
=
e^{-\lambda_2 s}\sum_{j=0}^{\alpha_2-1}\frac{(\lambda_2 s)^j}{j!},
\]
we obtain
\[
\int_{\mu u}^\infty \overline F_V(s)\,ds
=
\sum_{j=0}^{\alpha_2-1}\frac{\lambda_2^j}{j!}
\int_{\mu u}^\infty s^j e^{-\lambda_2 s}\,ds.
\]
Applying \eqref{eq:app-tail} with \(n=j\) and \(\eta=\mu u\),
\[
\int_{\mu u}^\infty s^j e^{-\lambda_2 s}\,ds
=
e^{-\lambda_2\mu u}
\sum_{i=0}^j
\frac{j!}{i!}\,
\frac{(\mu u)^i}{\lambda_2^{\,j-i+1}}.
\]
Therefore
\[
\int_{\mu u}^\infty \overline F_V(s)\,ds
=
e^{-\lambda_2\mu u}
\sum_{j=0}^{\alpha_2-1}\sum_{i=0}^j
\frac{(\mu u)^i}{i!}\,\lambda_2^{\,i-1}.
\]
Collecting terms by the power \(k=i\) of \(u\) yields
\[
\int_{\mu u}^\infty \overline F_V(s)\,ds
=
e^{-\lambda_2\mu u}
\sum_{k=0}^{\alpha_2-1}
\frac{(\alpha_2-k)\mu^k\lambda_2^{\,k-1}}{k!}\,u^k,
\]
since \(\sum_{j=k}^{\alpha_2-1}1=\alpha_2-k\).

Substituting
\[
f_U(u)=\frac{\lambda_1^{\alpha_1}}{(\alpha_1-1)!}u^{\alpha_1-1}e^{-\lambda_1u}
\]
into \eqref{eq:app-EM-start}, we obtain
\[
\int_0^\infty
f_U(u)e^{-\lambda_2\mu u}u^k\,du
=
\frac{\lambda_1^{\alpha_1}}{(\alpha_1-1)!}
\int_0^\infty
u^{\alpha_1+k-1}e^{-\kappa u}\,du
=
\frac{\lambda_1^{\alpha_1}}{(\alpha_1-1)!}
\cdot
\frac{(\alpha_1+k-1)!}{\kappa^{\alpha_1+k}}
\]
by \eqref{eq:app-gamma}. Hence
\[
\int_0^\infty
f_U(u)\int_{\mu u}^\infty \overline F_V(s)\,ds\,du
=
\frac{\lambda_1^{\alpha_1}}{(\alpha_1-1)!}
\sum_{k=0}^{\alpha_2-1}
\frac{(\alpha_2-k)\mu^k\lambda_2^{\,k-1}}{k!}
\cdot
\frac{(\alpha_1+k-1)!}{\kappa^{\alpha_1+k}}.
\]
Substituting this into \eqref{eq:app-EM-start} gives \eqref{eq:EMclosed}.

\paragraph{Proof of \eqref{eq:EM2closed}.}
Starting again from Theorem~\ref{thm:main}, write
\[
\E[M^2]
=
2\int_0^\infty t\bigl(1-G_M(t)\bigr)\,dt
\]
and decompose \(1-G_M(t)\) as in \eqref{eq:app-tail-decomp}. This gives
\[
\E[M^2]=2(K_1+K_2+K_3),
\]
where
\[
K_i=\int_0^\infty t\cdot (\text{\(i\)-th term of \eqref{eq:app-tail-decomp}})\,dt.
\]

\emph{Term \(K_3\).}
Substituting \(s=t/a\),
\[
2K_3
=
2\int_0^\infty t\,\overline F_U\!\left(\frac{t}{a}\right)\,dt
=
2a^2\int_0^\infty s\,\overline F_U(s)\,ds
=
a^2\E[U^2]
=
\frac{a^2\alpha_1(\alpha_1+1)}{\lambda_1^2}.
\]

\emph{Term \(K_1\).}
Applying Tonelli's theorem over the region
\[
\{(u,t):u\ge 0,\ t\ge \Delta u/(d-b)\}
\]
and substituting \(s=(t-cu)/d\), so that \(t=ds+cu\) and \(dt=d\,ds\), yields
\[
K_1
=
\int_0^\infty f_U(u)\int_{\Delta u/(d-b)}^\infty
t\,\overline F_V\!\left(\frac{t-cu}{d}\right)\,dt\,du
\]
\[
=
\int_0^\infty f_U(u)\int_{\mu u}^\infty (ds+cu)\,\overline F_V(s)\,d\,ds\,du
\]
\[
=
d^2\int_0^\infty f_U(u)\int_{\mu u}^\infty s\,\overline F_V(s)\,ds\,du
+
dc\int_0^\infty u f_U(u)\int_{\mu u}^\infty \overline F_V(s)\,ds\,du.
\]

\emph{Term \(K_2\).}
Applying Tonelli's theorem over the region
\[
\{(u,t):u\ge 0,\ au\le t\le \Delta u/(d-b)\}
\]
and substituting \(s=(t-au)/b\), so that \(t=bs+au\) and \(dt=b\,ds\), gives
\[
K_2
=
\int_0^\infty f_U(u)\int_{au}^{\Delta u/(d-b)}
t\,\overline F_V\!\left(\frac{t-au}{b}\right)\,dt\,du
\]
\[
=
b^2\int_0^\infty f_U(u)\int_0^{\mu u}s\,\overline F_V(s)\,ds\,du
+
ab\int_0^\infty u f_U(u)\int_0^{\mu u}\overline F_V(s)\,ds\,du.
\]

Define
\[
\theta_s
:=
\int_0^\infty f_U(u)\int_{\mu u}^\infty s\,\overline F_V(s)\,ds\,du,
\qquad
\theta_u
:=
\int_0^\infty u f_U(u)\int_{\mu u}^\infty \overline F_V(s)\,ds\,du.
\]
Using
\[
2b^2\int_0^\infty f_U(u)\int_0^{\mu u}s\,\overline F_V(s)\,ds\,du
=
b^2\E[V^2]-2b^2\theta_s
\]
and
\[
2ab\int_0^\infty u f_U(u)\int_0^{\mu u}\overline F_V(s)\,ds\,du
=
ab\,\E[UV]-2ab\,\theta_u,
\]
together with \(\E[V^2]=\alpha_2(\alpha_2+1)/\lambda_2^2\) and
\(\E[UV]=\E[U]\E[V]=\alpha_1\alpha_2/(\lambda_1\lambda_2)\), we obtain
\begin{equation}
	\E[M^2]
	=
	\frac{a^2\alpha_1(\alpha_1+1)}{\lambda_1^2}
	+
	\frac{b^2\alpha_2(\alpha_2+1)}{\lambda_2^2}
	+
	\frac{2ab\,\alpha_1\alpha_2}{\lambda_1\lambda_2}
	+
	2(d^2-b^2)\theta_s
	+
	2(dc-ab)\theta_u.
	\label{eq:app-EM2-start}
\end{equation}

It remains to evaluate \(\theta_s\) and \(\theta_u\).

For \(\theta_s\), using the Erlang survival function,
\[
\int_{\mu u}^\infty s\,\overline F_V(s)\,ds
=
\sum_{j=0}^{\alpha_2-1}\frac{\lambda_2^j}{j!}
\int_{\mu u}^\infty s^{j+1}e^{-\lambda_2 s}\,ds.
\]
Applying \eqref{eq:app-tail} with \(n=j+1\) and \(\eta=\mu u\),
\[
\int_{\mu u}^\infty s^{j+1}e^{-\lambda_2 s}\,ds
=
e^{-\lambda_2\mu u}
\sum_{i=0}^{j+1}
\frac{(j+1)!}{i!}\,
\frac{(\mu u)^i}{\lambda_2^{\,j+2-i}}.
\]
Hence
\[
\int_{\mu u}^\infty s\,\overline F_V(s)\,ds
=
e^{-\lambda_2\mu u}
\sum_{j=0}^{\alpha_2-1}\sum_{i=0}^{j+1}
(j+1)\frac{(\mu u)^i}{i!}\lambda_2^{\,i-2}.
\]
Collecting terms by \(k=i\) gives
\[
\int_{\mu u}^\infty s\,\overline F_V(s)\,ds
=
e^{-\lambda_2\mu u}
\sum_{k=0}^{\alpha_2}
\frac{S_k\mu^k\lambda_2^{\,k-2}}{k!}\,u^k,
\]
where
\[
S_k=\sum_{j=\max(0,k-1)}^{\alpha_2-1}(j+1).
\]
Multiplying by \(f_U(u)\) and integrating via \eqref{eq:app-gamma} yields
\[
\theta_s
=
\frac{\lambda_1^{\alpha_1}}{(\alpha_1-1)!}
\sum_{k=0}^{\alpha_2}
\frac{S_k\mu^k\lambda_2^{\,k-2}}{k!}
\cdot
\frac{(\alpha_1+k-1)!}{\kappa^{\alpha_1+k}}.
\]

For \(\theta_u\), from the derivation of \eqref{eq:EMclosed},
\[
\int_{\mu u}^\infty \overline F_V(s)\,ds
=
e^{-\lambda_2\mu u}
\sum_{k=0}^{\alpha_2-1}
\frac{(\alpha_2-k)\mu^k\lambda_2^{\,k-1}}{k!}\,u^k.
\]
Multiplying by \(u f_U(u)\) and integrating via \eqref{eq:app-gamma} gives
\[
\theta_u
=
\frac{\lambda_1^{\alpha_1}}{(\alpha_1-1)!}
\sum_{k=0}^{\alpha_2-1}
\frac{(\alpha_2-k)\mu^k\lambda_2^{\,k-1}}{k!}
\cdot
\frac{(\alpha_1+k)!}{\kappa^{\alpha_1+k+1}}.
\]

Substituting the expressions for \(\theta_s\) and \(\theta_u\) into
\eqref{eq:app-EM2-start} gives \eqref{eq:EM2closed}. \qed

\end{document}